\newtheorem{thm}{Theorem}
\newtheorem{cor}[thm]{Corollary}
\newtheorem{lem}[thm]{Lemma}
\def\squarebox#1{\hbox to #1{\hfill\vbox to #1{\vfill}}}
\newcommand{\qed}{\hspace*{\fill}
\vbox{\hrule\hbox{\vrule\squarebox{.667em}\vrule}\hrule}\smallskip}
\newenvironment{proof}{\noindent{\bf Proof:~~}}{\(\qed\)}
\begin{document}

\author{Erick Chastain\footnote{Computer Science Department, Rutgers University, erickc@cs.rutgers.edu}
\and Adi Livnat\footnote{Department of Biological Sciences, Virginia Tech, adi@vt.edu}
\and Christos Papadimitriou\footnote{Computer Science Division, University of California at Berkeley, CA, 94720 USA, christos@cs.berkeley.edu}
\and Umesh Vazirani\footnote{Computer Science Division, University of California at Berkeley, CA, 94720 USA, vazirani@cs.berkeley.edu}
}

\title{Multiplicative Updates in Coordination Games \\
and the Theory of Evolution}

\date{}

\maketitle

\begin{abstract}
\noindent We study the population genetics of Evolution in the important special case of weak selection, in which all fitness values are assumed to be close to one another.  We show that in this regime natural selection is tantamount to the multiplicative updates game dynamics in a coordination game between genes.  Importantly, the utility maximized in this game, as well as the amount by which each allele is boosted, is precisely the allele's {\em mixability}, or average fitness, a quantity recently proposed in \cite{mix} as a novel concept that is crucial in understanding natural selection under sex, thus providing a rigorous demonstration of that insight.  We also prove that the equilibria in two-person coordination  games can have large supports, and thus genetic diversity does not suffer much at equilibrium.  Establishing large supports involves answering through a novel technique the following question:  what is the probability that for a random square matrix $A$ both systems $Ax=1$ and $A^Ty=1$ have positive solutions?  Both the question and the technique may be of broader interest.

\end{abstract}

\newpage
\section{Introduction}
There was a crisis in Evolution in the early 20th century.  Mendel's insights and laws became known and largely accepted, except that their discrete nature was broadly perceived to contradict Darwin's theory, which was apparently about continuous traits and gradual change.   In the 1920s, this crisis was resolved, largely thanks to a mathematical formalism developed by Fisher, Wright, Haldane, and others.  Fitness was codified as the organism's expected offspring, and the equations of population genetics applied to genotypes demonstrated the possibility of continuous variation and change based on discrete genetic materials.   Ninety years later, this ``modern synthesis'' is the predominant rigorous way of understanding Evolution.

One of the important remaining open problems in Evolution today is the role of sex (often called ``the queen of problems in Evolution'').  Sex (creating offspring by recombining the genome of two organisms) is ubiquitous across life despite the fact that it is very costly:  it entails evolutionarily complex biological processes and behaviors that are risky and energy-inefficient, it breaks down lucky gene combinations, and it dilutes the parent's genetic contribution to the offspring, to mention a few.  As summarized in an important survey \cite{whysex}, all extant theories about the ``advantage of sex'' that counterbalances these costs fail to explain the mystery.  

In a recent paper \cite{mix}, a novel theory of the role of sex was proposed:  Natural selection under sex does not select allele combinations which optimize fitness (the default tacit assumption throughout research in Evolution), but instead promotes alleles that have high {\em mixability,} the ability to function adequately with a broad spectrum of genetic partners.   Under asex, of course, natural selection does select the fittest combinations.\footnote{To picture the difference, imagine a two-gene organism, where each gene comes in $n$ alleles (variants).  The fitness of the organism is captured now by an $n\times n$ matrix (for more genes, by a tensor) of positive numbers, sometimes called the {\em fitness landscape} of the species.  Without sex, natural selection will select the largest number in this matrix.  In contrast, sex will select {\em the rows and columns with the highest average} --- average, of course, with respect to the population's statistics.} In other words, the breakdown of lucky genetic combinations, which had been considered the major bug of sex (and which all theories of the advantage of sex had been attempting to explain away) is, in fact, sex's central {\em feature.}   

In the paper introducing mixability\cite{mix}, the central role played by mixability in natural selection under sex was demonstrated through extensive simulations on random fitness functions.  Any hope at a mathematical proof was hindered by the nonlinear nature of the equations.  The concept {\em was} proved for  certain cases of fitness functions \cite{jtb}, but these were highly specialized and symmetric cases for which the population genetic equations are linearized.  {\em In this paper we provide such a demonstration; in doing so, we make some totally unexpected connections between Evolution and familiar concepts from Computation and Game Theory.}

To achieve this, we explore the equations of population genetics in an important regime known as {\em weak selection} \cite{burger}.  Weak selection is a mathematical approximation of Kimura's {\em neutral theory}, a standpoint on Evolution that is presently broadly accepted (with modifications, of course) stating that most mutations do not affect  fitness (much).  Mathematically, in weak selection the elements of the fitness landscape are in $[1-s, 1+s]$ for some very small $s>0$.  In this regime, it is known \cite{nagy} that the state of the system (genotype frequencies) converge quickly to the so-called {\em Wright manifold}, where the distribution of genotypes is a product distribution of the allele frequencies in the population (see Section 2).   In other words, in this important regime of fitness functions, we can assume that the distribution of genotypes is a product distribution (the alleles of the various genes are chosen to form an offspring's genotype independently and with probability equal to their frequencies in the population), and thus we only need to track the allele frequencies  of the genes.  The distance of the genotype distribution from the product distribution (the distance from the Wright manifold) is called {\em linkage disequilibrium} and is of great interest in Evolution.  Under weak selection, it vanishes.

In this framework, we notice a remarkable convergence between Evolution and familiar concepts:

\begin{itemize}
\item   The frequency of an allele changes from one generation to the next by a proportional increment/decrement that is essentially {\em the incremental mixability of the allele.}

\item   The dynamics through which frequencies are updated are game dynamics of a {\em coordination game} \cite{jackson2002formation} (in a coordination game all players have the same utility for all strategy combinations).  The players are the genes, the strategies are the alleles, and the probability of playing a strategy is the frequency of the allele in the population.

\item The expected utility of an allele/strategy is precisely the mixability of this allele.

\item  The population genetics dynamics is thus precisely the {\em multiplicative update dynamics} of the game.  Notice that we are not saying that the equilibrium can be found through multiplicative updates; we are saying something much stronger: that Evolution {\em is} multiplicative updates.

\item  The process does converge to an equilibrium.  This was known  (e.g., \cite{nagy}), but now there is an easy proof through multiplicative updates, with mixability playing the role of regret.  In fact, this new proof establishes that convergence is robust under slight variants of the fitness values (Corollary 5)

\item  The equilibrium frequencies constitute a mixed Nash equilibrium of the subgame formed by their supports.

\end{itemize}

The fact that, at each step, the frequency of an allele is boosted by an amount equal to the mixability of the allele, provides the clearest possible verification of the hypothesis in \cite{mix} that an allele's mixability is what is being selected for in natural selection under sex.

Of course, there had been connections between game theory and evolution before, most notably Evolutionary Game Theory \cite{egt} whereby one models how the strategic behavior of a species is affected by the evolutionary advantages that it presents.  However, it is fair to say that many more game theorists have been interested in this connection than evolution researchers.  Our result connects game theory to the core of evolution theory, by providing an unusual point of view:  Evolution is a coordination game between genes, where the mixed strategy of each gene is the distribution of the alleles in the population, and the dynamics are fixed to multiplicative updates.  The genes are the players while the population stores the state.

\subsection*{Mixability and Diversity}
Why is mixability advantageous?  (Since sex is ubiquitous, there must be an advantage.)  One intuitive explanation may be that, by promoting ``good mixer genes'' (a phrase by Kimura \cite{kim} who had anticipated some of this thinking), it enhances genetic diversity.  But what happens at equilibrium (when the allele frequencies finally stabilize, as predicted by our results)?\footnote{In actual Evolution, of course, nothing ever stabilizes, as new mutations introduce new strategies in the game, and life goes on.  Note that our model does not include mutations.}   It would be disappointing --- and, by the above intuition, detrimental to the argument in favor of mixability --- if it so happens that, in the end, natural selection by sex ends up putting all its chips on a single allele per gene (as it often happened in the simulations in \cite{mix}).  The question arises:  {\em How large is the  typical support  of an equilibrium for our population genetics process?}  

In the second half of the paper (Section 4) we answer this question in a positive way:  there {\em are} exponentially many equilibria whose support contains a significant fraction of the alleles of each gene, see Corollary \ref{numbereq}.  We do this through a digression to a completely different problem:  Let $A$ be an $n\times n$ matrix.  What are the chances that the solution to the system $Ax=1$ is positive?   Assume that the entries of the matrix are iid from a distribution that is continuous and symmetric around zero, say uniform in $[-1,1]$; in this case, with probability one the solution exists and has no zero term.   Intuitively, the answer is $2^{-n}$ (each row of the inverse must have a positive sum).  It is not hard to show that this insight is correct.  

But suppose that we want both systems $Ax=1$ and $A^Ty =1$ to have positive solutions.  What are the chances of that happening?  One expects this to be about $2^{-2n}$, but there is dependence now and the calculation is not straightforward.  Intuitively the dependence is favorable, but how does one establish this?

We prove that this probability is {\em at least $2^{-2n+1}$} (and thus the dependence {\em is} indeed favorable, see Theorem \ref{diversitythm}).  The proof uses a potential function argument reminiscent of the Berlekamp switching game \cite{berl}.

The connection to Evolution is the following:  First, one can show that only square submatrices of the fitness matrix are likely to support an equilibrium of the population dynamics.  A square submatrix of the fitness matrix is the support an equilibrium if and only if the corresponding submatrix of {\em selections} (fitness minus one) has the property that the solutions to both row and column linear systems with unit right-hand sides are positive.  By showing that this happens with sufficiently high probability we establish that, in expectation at least, there are equilibria with substantial supports (in fact, quite a few of them), and thus diversity is not always lost at equilibrium.  

\section{Background}
A {\em gene} is a molecular unit of heredity.  Genes in organisms come in variants called {\em alleles}.  A combination of alleles, one for each gene, is called a {\em genotype.}  In this paper we study how the statistical make-up of a population of genotypes changes from generation to generation in a sexual species.  Members of the new generation are produced through {\em recombination,} which is the formation of a new genotype by choosing alleles from two parent genotypes in the previous generation. 

We make several (more or less standard) simplifying assumptions, which are generally trusted not to change substantially the essence of the evolutionary dynamics.  The population of genotypes is infinite.  We assume that the genotypes are {\em haploid} (even though humans are not) in that they contain only one copy of each gene, and that the organisms mate at random and {\em panmictically} (the latter meaning that the population is not bipartite to distinguish between males and females) to produce a new generation; further, we assume there is no overlap between generations (as if all mating happens simultaneously and soon before death). Each offspring's genome is formed by picking, for each gene, an allele from one of the two parent genomes, independently and with probability half each (that is, we assume that there is no {\em linkage}, i.e. genes that are close together on the same chromosome and thus are likely to be inherited together from the same parent).  

Our exposition will be for the case of two genes with $m$ and $n$ alleles respectively, even though our results can be easily seen to extend to any number of genes (with the exception of the results in Section 4 which are particular to matrices); we shall occasionally point out how the straightforward generalization proceeds.   Thus genotypes are pairs $ij$.  Each genotype $ij$ has a {\em fitness value} $w_{ij}$ which is the expected number of offspring the genotype produces (by mating randomly).  The matrix W, often called the {\em fitness landscape} of the species, entails the basic genetic parameters of the species (it is a $k$-dimensional tensor for $k$ genes).

We shall be interested in the statistics of the genotypes in the population.  The frequency of the genotype $ij$ will be denoted $p_{ij}$.  The matrix of the $p_{ij}$'s is the {\em state} of the dynamical system we shall follow.  We denote the value of $p_{ij}$ in generation $t$ by $p_{ij}^t$.

How do the $p_{ij}^t$'s change from one generation to the next?  Each member of the genotype $ij$ mates $w_{ij}$ times, each time with an organism with a random genotype, and an offspring genotype is created, whose alleles are chosen from the two parents at random and independently.  Then, it is easy to see that the expected frequency of genotype $ij$ at the next generation, $p_{ij}^{t+1}$, can be written:
$$p_{ij}^{t+1} = {1\over \bar w_{t}^{2}} \sum_{l} p_{il}^{t}w_{il} \sum_{k} p_{kl}^{t}w_{kj} $$
where $\bar w_{t}=\sum_{ij} p_{ij}^{t}w_{ij} $ is the average fitness of the population needed for normalization, so that frequencies add to one.

\subsection*{Wright Manifold, Weak Selection, and Nagylaki's Theorem}
Besides the $p_{ij}$ frequencies, one has the marginal frequencies, one for each allele:  $x_i=\sum_j p_{ij}$ and $y_j=\sum_ip_{ij}$.  Within the simplex of the $p_{ij}$'s, of particular interest to us is the {\em Wright manifold} on which $p_{ij}$ is a product distribution (the matrix $p_{ij}$ has rank one):  $p_{ij}=x_i\cdot y_j$.  It turns out that, on the Wright manifold, the population genetic equations take a much simpler form, expressed in terms of the marginal probabilities $x_i$ and $y_j$ (see Lemma 2).  

Life, in general, does not reside on the Wright manifold --- that is to say, genotype frequencies do not in general have rank one.  This is called {\em linkage disequilibrium}, and is measured by the distance from the Wright manifold $D_{ij} = p_{ij}-x_i\cdot y_j$.  Intuitively, it comes about because differences in the fitness of genotypes distort the allele statistics; just imagine two alleles of two genes whose combination is deleterious.  By definition, $D_{ij}$ is zero on the Wright manifold.

{\em Weak selection} is an important point of view on Evolution, which postulates that the entries of the tensor $W$ are all very close to one another.  Differences in fitness are minuscule, and the $w_{ij}$'s all lie within the interval $[1-s,1+s]$ for some very small $s>0$ which we call the {\em selection strength}.  It is a mathematical embodiment of the {\em neutral theory} of Kimura \cite{kimura1985neutral}, stating roughly that Evolution proceeds mostly through statistical drift due to sampling error that has no impact on fitness.  

There is an important connection between the Wright manifold and weak selection, best articulated through Nagylaki's Theorem.  Consider the evolution of genome frequencies $p_{ij}^t$ (or for more that two genes) in a situation in which the fitness values are within $[1-s,1+s]$ for some tiny $s>0$ --- that is, weak selection prevails.  Consider also the corresponding time series of linkage disequilibria $D_{ij}^t = p^t_{ij}-x_i\cdot y_j$.   

\begin{thm} {\bf (Nagylaki \cite{nagy,nagy2})}  (1) for any $t\geq t_0 = 3\log{1\over s}$ and any $i,j$, $D^t_{ij} = O(s)$; and furthermore

\smallskip\noindent (2) for $t\geq t_0$ there is a corresponding process $\{\hat p_{ij}\}$ on the Wright manifold such that (a) $|\hat p_{ij}^t - p_{ij}^t| = O(s)$; and (b) both processes converge and there is one-to-one correspondence between  the equilibria of $p_{ij}^t$ and the equilibria of $\hat p_{ij}^t$.  
\end{thm}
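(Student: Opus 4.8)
The plan is to exploit a separation of time scales: recombination operates on an $O(1)$ time scale and pulls the genotype distribution toward the Wright manifold, whereas selection, whose entire effect is of order $s$, acts on a much slower one. Concretely, write $w_{ij}=1+s\,\sigma_{ij}$ with $|\sigma_{ij}|\le 1$, so that $\bar w_t=1+O(s)$, and decompose the one-generation map into the value it takes at $s=0$ (pure recombination) plus an $O(s)$ correction. Substituting $p_{ij}^t=x_i^t y_j^t+D_{ij}^t$ and using only that all the $p$'s, $x$'s and $y$'s lie in $[0,1]$, a direct computation produces a recursion of the shape $D_{ij}^{t+1}=(1-r)\,D_{ij}^{t}+O(s)$, where $1-r\in[0,1)$ is the contraction factor of pure recombination on linkage disequilibrium (equal to $\tfrac12$ for the two-gene independent-assortment model). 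Iterating from an arbitrary start with $|D_{ij}^{0}|\le 1$ gives $|D_{ij}^{t}|\le (1-r)^{t}+O(s)\sum_{k\ge 0}(1-r)^{k}=(1-r)^{t}+O(s)$, so once $(1-r)^{t}\le s$, i.e.\ $t\ge t_0=\Theta(\log\tfrac1s)$ (the stated constant $3$ leaving ample room), we get $D_{ij}^{t}=O(s)$. This is part (1), and it is the routine half of the theorem.

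For part (2) the plan is a shadowing argument. For $t\ge t_0$, define $\hat p_{ij}^{t}=\hat x_i^{t}\hat y_j^{t}$ by letting $(\hat x,\hat y)$ run the marginal (Wright-manifold) dynamics started from the marginals $(x^{t_0},y^{t_0})$ of the true state at time $t_0$; then $\hat p^{t_0}$ and $p^{t_0}$ already share their marginals and differ only by the $O(s)$ term $D^{t_0}$. The point of the $s=0$ decomposition above is that, on the $O(s)$-tube around the Wright manifold in which the true trajectory permanently lives by part (1), the full update is a $C^1$, $O(s)$-small perturbation of the product/marginal update, while the Wright manifold is \emph{normally attracting} for the unperturbed map (identity on the marginals, contraction with rate $1-r$ transverse to it). I would then invoke persistence of normally hyperbolic invariant manifolds (Fenichel / geometric singular perturbation theory): the Wright manifold survives as a nearby invariant ``slow'' manifold $\mathcal W_s$, $O(s)$-close to it and still normally attracting, carrying an induced dynamics that is an $O(s)$-perturbation of the marginal dynamics. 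Since the true trajectory's distance to $\mathcal W_s$ decays geometrically and stays $O(s)$, tracking it reduces, up to an $O(s)$ error, to comparing two trajectories of two $O(s)$-close maps on $\mathcal W_s$ that start $O(s)$-apart; this yields $|\hat p_{ij}^{t}-p_{ij}^{t}|=O(s)$ uniformly in $t$, i.e.\ (2a). For (2b): the marginal dynamics is exactly the multiplicative-update dynamics of a coordination game, which this paper shows converges (average fitness is a potential), so $\hat p^{t}$ converges; its limiting equilibria are, for generic $W$, hyperbolic, so by the implicit function theorem each perturbs to a unique nearby fixed point of the induced map on $\mathcal W_s$, and conversely any equilibrium of $p^{t}$ lies in the $O(s)$-tube, hence on $\mathcal W_s$, hence has marginals near an equilibrium of $\hat p^{t}$ --- giving the one-to-one correspondence --- and $p^{t}$ itself converges because it is trapped in a shrinking neighborhood of the convergent shadow trajectory.

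The step I expect to be the real obstacle is precisely the uniform-in-$t$ control claimed in (2a). A naive Lipschitz/Gronwall estimate only gives $|\hat p^{t}-p^{t}|=O(st)$, which is worthless over the unbounded horizons on which one wants to follow equilibria, so the argument genuinely needs the contractive, normally hyperbolic geometry of the recombination map --- whether packaged as Fenichel's theorem or unpacked as a Lyapunov function adapted to the splitting into the fast, contracting $D$-directions and the slow marginal directions --- together with hyperbolicity of the limiting equilibria to make the correspondence in (2b) well defined. Everything else (the $D$-recursion, the $s=0$ product identity, and convergence of the marginal multiplicative-update dynamics) is either elementary or supplied elsewhere in this paper.
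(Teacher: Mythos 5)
The paper does not prove this theorem: it is imported verbatim from Nagylaki's work \cite{nagy,nagy2} as background, so there is no in-paper argument to compare yours against. Measured against the actual literature proof (Nagylaki's perturbation analysis, and the Nagylaki--Hofbauer--Brunovsk\'y treatment via geometric singular perturbation theory), your outline is on the right track: part (1) really is the routine half, and your recursion $D^{t+1}=(1-r)D^t+O(s)$ with geometric decay of the initial disequilibrium plus an $O(s)$ steady-state term is essentially how it is done; and part (2) really is proved by exhibiting the Wright manifold as a normally attracting invariant manifold of the pure-recombination map and invoking Fenichel-type persistence for the $O(s)$ selection perturbation.

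That said, as a proof your part (2) is still an outline with the hard content delegated to the invocation of Fenichel's theorem, and three points need attention before it closes. First, the slow dynamics at $s=0$ is the \emph{identity} on the marginals, so the induced map on the perturbed slow manifold $\mathcal{W}_s$ is an $O(s)$ perturbation of the identity; the equilibrium correspondence and the convergence statement cannot be read off from the discrete map directly but must go through the time-rescaled limit ($t\mapsto st$), i.e., the continuous-time selection/replicator dynamics, whose equilibria one then perturbs. Second, and relatedly, the one-to-one correspondence in (2b) is simply false without a hyperbolicity (or at least regularity/isolatedness) hypothesis on the equilibria of that limiting dynamics; Nagylaki's actual theorem carries such a hypothesis, the statement as reproduced in this paper suppresses it, and your parenthetical ``for generic $W$'' is doing real load-bearing work that should be made an explicit assumption. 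Third, you should not source convergence of the marginal dynamics from this paper's Corollary 4: that corollary is a time-averaged regret bound, not pointwise convergence of the trajectory, and in any case would be circular here; the correct ingredient is that mean fitness is a strict Lyapunov function for the selection dynamics on the product of simplices. With those three repairs --- rescaled slow flow, explicit hyperbolicity hypothesis, and a genuine Lyapunov-function convergence argument --- your skeleton matches the known proof.
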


Nagylaki's Theorem states essentially that, in order to understand a genotype frequency process in the weak selection regime, one can instead follow a closely related process on the Wright manifold.  As we shall see next, it turns out that this simplifies things tremendously and brings about some unexpected connections.

\section{Coordination Games between Genes}
In a {\em game} each of finitely many {\em players} has a set of {\em strategies}, and a {\em payoff function} mapping the cartesian product of the strategy sets to the reals.  A game in which all payoff functions are identical is called a {\em coordination} game.  In a coordination game the interests of all players are perfectly aligned, and, intuitively, they all strive to hit the same high value of the common payoff function.  In terms of equilibrium calculations, they are trivial.

Fix a game, and a mixed strategy profile, that is, for each player $p$ a distribution $x^p$ over her strategies.  For each player $p$ and each strategy $a \in S_p$ one can calculate the expected payoff of this strategy, call it $q(a)$.  The {\em multiplicative update dynamics} of the game (see, for example, \cite{kalethesis}) transform the mixed strategy profile $\{x^p\}$ as follows:  For each player $p$ and each strategy $a\in S_p$, the probability $x^p(a)$ of player $p$ playing $a$ becomes 
$${x^p(a)\cdot (1+\epsilon\cdot q(a))\over 1+ \epsilon\cdot \sum_{b\in S_p}x^p(b)q(b)} = {x^p(a)\cdot (1+\epsilon\cdot q(a))\over 1+ \epsilon\cdot \bar q}, $$  
where by $\bar q$ we denote the expected payoff to $p$ (in a coordination game, to all players).  That is, the probability of playing $a$ is boosted by an amount proportional to its expected payoff, and then renormalized.  It is known that two players following the multiplicative update dynamics attain the Nash equilibrium in zero-sum games (this has been rediscovered many times over the past fifty years, see for example \cite{kalethesis}), but not in general games \cite{blum2007learning}.  Beyond games, and more importantly, the multiplicative updates dynamics lies at the foundations of a very simple, intuitive, robust and powerful algorithmic idea of very broad applicability \cite{cesa2006,kalethesis}.

Going now back to population genetics dynamics, let $W_{ij}$ be a fitness landscape (matrix for two genes, tensor for more) in the weak selection regime, that is, each entry is in the interval $[1-s,1+s]$.  Define the {\em differential fitness landscape} to be the tensor with entries $\Delta_{ij}={W_{ij}-1\over s}$.

We next point out a useful way to express the important analytical simplification afforded by the Wright manifold:

\begin{lem} On the Wright manifold, the population genetics dynamics becomes 
$$p_{ij}^{t+1} = {1\over \bar w_{t}} x_i^t\cdot y_j^t\cdot w_{ij},$$
and similarly for more genes.
\end{lem}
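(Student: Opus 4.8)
The plan is to obtain the formula by a direct computation: substitute the defining relation of the Wright manifold, $p_{ij}^t = x_i^t\,y_j^t$, into the one-generation dynamics and simplify. It helps to picture one generation as recombination --- each new genotype is assembled by drawing, independently for the two genes, an allele from the population's current marginal frequencies for that gene --- followed by selection, which multiplies the frequency of genotype $ij$ by $w_{ij}$ and renormalizes so that frequencies again sum to one. The entire content of the lemma is then the elementary observation that recombination acts as the identity on a product distribution.

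Concretely, I would first note that when $p_{ij}^t = x_i^t\,y_j^t$ the marginal allele frequencies of the two genes are precisely $x^t$ and $y^t$, so drawing the two alleles independently from these marginals reproduces the same distribution $x_i^t\,y_j^t$: the recombination step does nothing on the Wright manifold. I would then apply selection to this product distribution: genotype $ij$ is multiplied by $w_{ij}$, and the result is divided by the normalizer, which on the Wright manifold is $\sum_{k,l} x_k^t\,y_l^t\,w_{kl} = \sum_{k,l} p_{kl}^t\,w_{kl} = \bar w_t$. This yields exactly $p_{ij}^{t+1} = \frac{1}{\bar w_t}\,x_i^t\,y_j^t\,w_{ij}$. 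The extension to $k\ge 3$ genes is word-for-word the same: the Wright manifold is the set of rank-one tensors $p_{i_1\cdots i_k}^t = \prod_{g} x^{(g),t}_{i_g}$, recombination independently re-samples each coordinate from its own marginal and hence fixes every such tensor, and selection then multiplies by the corresponding fitness-tensor entry and renormalizes by $\bar w_t$.

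I do not expect any real obstacle: once the generation map is split into recombination followed by selection, the lemma is essentially a one-line computation. The two points that deserve a moment's care are (i) justifying cleanly that recombination leaves every product distribution fixed, and (ii) checking that the constant absorbed by the renormalization is exactly the population mean fitness $\bar w_t$ rather than some other weighted average of the fitness values. Both are routine. Note, finally, that --- unlike the results that build on it --- this identity is exact and uses nothing about weak selection; weak selection (through Nagylaki's Theorem) enters only afterwards, to justify restricting attention to the Wright manifold.
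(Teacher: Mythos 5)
Your proof is correct and is in substance the same as the paper's: the paper quotes from \cite{mix} the general recursion $p_{ij}^{t+1} = \frac{1}{\bar w_t}\,w_{ij}\,(p_{ij}^t - rD_{ij}^t)$ and sets $D_{ij}^t=0$, and since $p_{ij}^t - rD_{ij}^t = (1-r)\,p_{ij}^t + r\,x_i^t y_j^t$ is exactly your recombination operator, your observation that recombination fixes product distributions is precisely the content of the step the paper imports by citation. One small point of orientation: your decomposition (recombination before selection, with normalizer $\bar w_t = \sum_{kl} x_k^t y_l^t w_{kl}$) matches the form used in the paper's own proof rather than the selection-first recursion displayed in Section 2; the two produce the same marginals $x^{t+1}, y^{t+1}$, which is all that is used downstream.
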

\begin{proof}
We have been using a special case of the general recombination dynamics, with $r=1$. For the general dynamics, as is shown in \cite{mix}, we can re-write the population genetics dynamics as:
$$p_{ij}^{t+1} = \frac{1}{\bar{w}_{t}} w_{ij}( p_{ij}^{t} - r D_{ij}^{t})$$
where $D_{ij}^{t} = p_{ij}^{t} - x_{i}^{t} y_{i}^{t}$ is the linkage disequilibrium. On the Wright Manifold, the linkage disequilibrium is equal to 0, and so we have $p_{ij}^{(t+1)} = \frac{1}{\bar{w_{t}}} p_{ij}^{(t)} w_{ij}$. Finally, since $D_{ij}^{t} = 0$, $p_{ij}^{t} = x_{i}^{t} y_{j}^{t}$. The result follows.  
\end{proof}

We are now ready for the main result of this section:

\begin{thm}
Under weak selection with selection strength $s$, the population genetic dynamics is precisely the multiplicative update dynamics of a coordination game whose payoff matrix is the differential fitness landscape and $\epsilon = s$.
\end{thm}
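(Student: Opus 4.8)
The plan is to reduce, via Nagylaki's Theorem (Theorem 1), to the dynamics restricted to the Wright manifold, where Lemma 2 gives the clean update $p_{ij}^{t+1} = \frac{1}{\bar w_t} x_i^t y_j^t w_{ij}$, and then to read off directly that the induced dynamics on the allele frequencies $(x^t)$ and $(y^t)$ is literally the multiplicative update rule of the stated coordination game. First I would set up the game: the two players are the two genes, the strategy sets are the allele sets (of sizes $m$ and $n$), the mixed strategy of gene $1$ is the allele-frequency vector $x=(x_i)$ and that of gene $2$ is $y=(y_j)$, and the common payoff for a pure profile $ij$ is the entry $\Delta_{ij}=(w_{ij}-1)/s$ of the differential fitness landscape. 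Since both players receive the same value $\Delta_{ij}$, this is a coordination game.

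Next I would compute the two quantities that drive the multiplicative update with step $\epsilon=s$. The expected payoff of strategy $i$ to gene $1$, against $y$, is $q(i)=\sum_j y_j^t \Delta_{ij}$, and the overall expected payoff is $\bar q = \sum_i x_i^t q(i) = \sum_{ij} x_i^t y_j^t \Delta_{ij}$; the analogous formulas hold for gene $2$. The key arithmetic observation is the pair of identities $\sum_j y_j^t w_{ij} = 1 + s\,q(i)$ and $\bar w_t = \sum_{ij} x_i^t y_j^t w_{ij} = 1 + s\,\bar q$, both of which follow by substituting $w_{ij} = 1 + s\Delta_{ij}$ and using that $x^t,y^t$ are probability vectors (and that $p^t$ is the product distribution $x_i^t y_j^t$ on the Wright manifold, which is what makes the second identity hold). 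Marginalizing Lemma 2 then gives, upon summing $p_{ij}^{t+1}$ over $j$, the formula $x_i^{t+1} = \frac{x_i^t}{\bar w_t}\sum_j y_j^t w_{ij} = \frac{x_i^t(1 + s\,q(i))}{1 + s\,\bar q}$, which is exactly the multiplicative update with $\epsilon=s$; by symmetry the same holds for $y_j^{t+1}$. The extension to $k$ genes is identical, with $\Delta$ a $k$-tensor, the $r$-th player's mixed strategy the marginal allele frequencies of gene $r$, and $q(a)$ the expectation of $\Delta$ over the other $k-1$ marginals.

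The one point requiring care — and the only real obstacle — is pinning down what ``the population genetics dynamics is precisely the multiplicative update dynamics'' should mean, since after one step the genotype frequencies need not lie exactly on the Wright manifold: the product of the updated marginals is $\frac{x_i^t y_j^t (1+s\,q(i))(1+s\,q(j))}{(1+s\,\bar q)^2}$, which is not in general equal to $p_{ij}^{t+1}$. This is exactly what Nagylaki's Theorem absorbs: for $t\ge t_0$ the true process stays within $O(s)$ of the Wright-manifold process obtained by iterating Lemma 2 and reprojecting onto the product distribution of the current marginals, and the two processes have the same equilibria. So I would phrase the theorem as an exact identity between the marginal (allele-frequency) dynamics of this Wright-manifold process and the multiplicative updates of the coordination game with payoff tensor $\Delta$ and $\epsilon=s$, prove that identity by the computation above, and then invoke Theorem 1 to transfer the statement — together with its downstream consequences on convergence and on equilibria being Nash equilibria of the support subgame — back to the original weak-selection process.
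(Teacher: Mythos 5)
Your proof is correct and follows essentially the same route as the paper's: marginalize the Wright-manifold update of Lemma 2 over $j$, substitute $w_{ij}=1+s\Delta_{ij}$ to get $x_i^{t+1}=x_i^t(1+s\,q(i))/(1+s\,\bar q)$, and recognize this as the multiplicative update with $\epsilon=s$. Your added remark that the product of the updated marginals need not equal $p_{ij}^{t+1}$ exactly, so the identity is really about the Wright-manifold process with Nagylaki's Theorem transferring it to the true weak-selection dynamics, is a point the paper leaves implicit and is a welcome clarification rather than a deviation.
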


\begin{proof}
We only show the derivation for two genes, the general case being a straightforward generalization.  
$$x^{t+1}_i = \sum_j p_{ij}^{t+1}
 = {1\over \bar w_{t}}\sum_j x_i^t y_j^t w_{ij}
 = {x_i^t\over \bar w_{t}}(1+s\sum_j  y_j^t \Delta_{ij})
 =  {x_i^t\cdot (1+s\sum_j  y_j^t \Delta_{ij})\over 1 + s\cdot \bar \Delta }.
 $$
Here the first equation is the definition of marginal frequencies, the second is the Lemma, the third is the definition of $\Delta_{ij}$, and the last one recalls that the expectation of $w_{ij}$ is the one plus $s$ times the expectation of $\Delta_{ij}$.  The last expression is precisely the multiplicative update dynamics, completing the proof.
\end{proof}

It is known that the population genetics dynamics in the weak selection regime converge \cite{nagy,nagy2}; however, our result yields an easy proof. Moreover, we are able to show that the dynamics asymptotically achieve population frequencies that are comparable to the $i,j$ pair for which the cumulative mixability is maximized:

\begin{cor}
Let $m_{x}^{t}(i) = \sum_{j} y_{j}^{t}\Delta_{ij}$ and $m_{y}^{t}(j) = \sum_{i} x_{i}^{t}\Delta_{ij}$. The population genetics dynamics converge asymptotically in the weak selection regime to the genotype with maximal cumulative mixability. More formally, consider $i,j$ that maximizes 
$$\frac{1}{T} (\sum_{t=1}^{T} m_{x}(i)^{t} +m_{y}(j)^{t} + s \sum_{t=1}^{T} |m_{x}(i)^{t}| +|m_{y}(j)^{t}|)$$ 
it also holds that:  
$$\frac{1}{T} (\sum_{t=1}^{T} m_{x}^{t} \cdot x^{t}+ m_{y}^{t}\cdot y^{t} - \sum_{t=1}^{T} m_{x}(i)^{t} +m_{y}(j)^{t} + s \sum_{t=1}^{T} |m_{x}(i)^{t}| +|m_{y}(j)^{t}|) \geq 0$$ 
in the limit as $T \rightarrow \infty$ and thus that we do no worse than the best possible $i,j$ pair, and this $i,j$ pair also maximizes the cumulative mixability $\frac{1}{T} \sum_{t=1}^{T} 2+s(m_{x}(i)^{t} +m_{y}(j)^{t})$.  
\end{cor}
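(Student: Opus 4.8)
The plan is to read the statement as nothing more than the classical regret guarantee for the multiplicative weights method, applied once to each of the two genes and then combined. By the theorem identifying population-genetics dynamics with multiplicative updates (with $\epsilon=s$), under weak selection the sequence $x^1,x^2,\dots$ of allele frequencies of the first gene is exactly the multiplicative-weights trajectory with learning rate $s$ driven by the payoff vectors $q^t=(m_x^t(1),\dots,m_x^t(m))$, and symmetrically $y^1,y^2,\dots$ is the multiplicative-weights trajectory with rate $s$ driven by $(m_y^t(1),\dots,m_y^t(n))$. The one quantitative input we need is boundedness: since every $w_{ij}\in[1-s,1+s]$ we have $|\Delta_{ij}|\le 1$, hence $|m_x^t(i)|\le 1$ and $|m_y^t(j)|\le 1$ for all $t$; and weak selection makes $s$ tiny, in particular $s\le\tfrac12$.

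First I would run the standard potential-function argument for the first gene. Writing $w_i^t$ for the unnormalised weights, $w_i^{t+1}=w_i^t(1+s\,m_x^t(i))$ and $x_i^t=w_i^t/\Phi^t$ with $\Phi^t=\sum_i w_i^t$, one gets $\Phi^{t+1}=\Phi^t\,(1+s\,m_x^t\!\cdot x^t)$, hence $\Phi^{T+1}=\Phi^1\prod_{t\le T}(1+s\,m_x^t\!\cdot x^t)$, whereas for any fixed allele $i$ we have $\Phi^{T+1}\ge w_i^{T+1}=w_i^1\prod_{t\le T}(1+s\,m_x^t(i))$. Taking logarithms, bounding $\ln(1+z)\le z$ on the left and $\ln(1+z)\ge z-z^2$ (valid for $|z|\le\tfrac12$) on the right, and dividing through by $s$, yields
$$\sum_{t=1}^{T} m_x^t\!\cdot x^t \ \ge\ \sum_{t=1}^{T} m_x^t(i)\ -\ s\sum_{t=1}^{T}\bigl(m_x^t(i)\bigr)^2\ -\ \tfrac1s\ln\tfrac{\Phi^1}{w_i^1}.$$
Since $|m_x^t(i)|\le 1$ we may replace $(m_x^t(i))^2$ by $|m_x^t(i)|$, and if the weights are initialised at the true initial frequencies then $\tfrac1s\ln(\Phi^1/w_i^1)=\tfrac1s\ln(1/x_i^1)=:C_x(i)$ is a constant independent of $T$ (finite because we take the initial state to have full support). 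The identical computation for the second gene gives the same inequality with $x\leftrightarrow y$, $i\leftrightarrow j$, $m\leftrightarrow n$, and a constant $C_y(j)$.

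Adding the two inequalities, dividing by $T$, and rearranging shows that for \emph{every} fixed pair $(i,j)$
$$\frac1T\!\left(\sum_{t=1}^{T} m_x^t\!\cdot x^t + m_y^t\!\cdot y^t\ -\ \sum_{t=1}^{T}\bigl(m_x^t(i)+m_y^t(j)\bigr)\ +\ s\sum_{t=1}^{T}\bigl(|m_x^t(i)|+|m_y^t(j)|\bigr)\right)\ \ge\ -\,\frac{C_x(i)+C_y(j)}{T},$$
whose right-hand side tends to $0$; since the dynamics converges (by the preceding discussion, or \cite{nagy}) the Cesàro averages on the left have genuine limits, giving the displayed inequality in the limit $T\to\infty$. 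As this holds for every fixed $(i,j)$ it holds in particular for the pair that maximises the comparator quantity, which is exactly the assertion. Finally, the additive $2$ in $\tfrac1T\sum_t\bigl(2+s(m_x^t(i)+m_y^t(j))\bigr)$ is a constant, so maximising this ``cumulative mixability of genotype $ij$'' is the same as maximising $\tfrac1T\sum_t(m_x^t(i)+m_y^t(j))$; and because the correction $s\sum_t(|m_x^t(i)|+|m_y^t(j)|)$ is of lower order in the weak-selection scaling, this is, up to $O(s)$, the same maximisation appearing in the comparator.

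I do not anticipate a genuine obstacle here: the argument is the textbook multiplicative-weights bound with ``mixability playing the role of regret''. The two places that need care are (i) verifying $|\Delta_{ij}|\le 1$ so that the logarithmic estimates apply with $\epsilon=s\le\tfrac12$, and (ii) noting that the multiplicative-weights regret bound is valid against an \emph{arbitrary} adversarial sequence of payoff vectors, so that it does not matter that $m_x^t$ depends on the simultaneously-evolving $y^t$ (and vice versa) --- the coupling between the two genes is simply irrelevant to each gene's own regret guarantee. The only soft point is cosmetic: pinning down the exact placement of the $s\sum|\cdot|$ terms in the comparator and identifying the maximising pair with ``the genotype of maximal cumulative mixability'', an identification that is exact only up to the $O(s)$ weak-selection error.
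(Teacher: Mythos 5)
Your proposal is correct and follows essentially the same route as the paper: both identify the dynamics as multiplicative updates with $\epsilon=s$ and payoffs bounded in $[-1,1]$, apply the standard external-regret bound to each gene separately, add the two bounds, divide by $T$, and let $T\to\infty$. The only difference is that the paper invokes the regret bound as a black box (Theorem 3 of \cite{kalethesis}, giving the constant $\frac{\log n+\log m}{s}$ from uniform initial weights), whereas you rederive it via the potential-function argument using the true initial frequencies as weights, which requires the mild extra assumption that the initial population has full support.
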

\begin{proof}
We have established in the Theorem that the population genetics dynamics is the multiplicative update dynamics with $\epsilon=s > 0$ for each player. We define some notation and then present the main tools needed for the convergence guarantee.\\
\\
It holds for the weak selection regime that $m_{x}^{t}$ and $m_{y}^{t}$ are in the range $[-1,1]$. Therefore by Theorem 3 in \cite{kalethesis} (and his remark that the same bound holds for our form of the multiplicative update dynamics) we can thus guarantee that, for any pair $i$ and $j$:
$$\sum_{t=1}^{T}  m_{x}^{t} \cdot x^{t}- m_{y}^{t}\cdot y^{t} - \sum_{t=1}^{T} m_{x}(i)^{t} +m_{y}(j)^{t} + s \sum_{t=1}^{T} |m_{x}(i)^{t}| +|m_{y}(j)^{t}| \geq -\frac{\log n+\log m}{s}$$  
clearly dividing both sides by $T$ and taking the limit as $T$ goes to infinity gives us 0 on the RHS. In particular, it will hold for the $i$ and $j$ that maximize $m_{x}$ and $m_{y}$. This shows that as $T \rightarrow \infty$, the dynamics converge to the $i$ and $j$ that maximize $\frac{1}{T}\sum_{t=1}^{T} 2+s(m_{x}(i)^{t} +m_{y}(j)^{t})$. The result follows.     
\end{proof}

In fact, we can further show that convergence is robust under perturbations of the fitness landscape:

\begin{cor}
Let $\tilde{w}_{ij}$ be subject to the weak selection regime, and thus be of the form $1+s\Delta_{ij}$. Let the differential fitness $w_{ij}^{t}$ be  $\tilde{w}_{ij} + \nu_{ij}^{t}$ with $\nu_{ij}^{t}$ drawn i.i.d from a zero-mean distribution and bounded almost surely in the interval $[-s(1-|\Delta_{ij}|),s(1-|\Delta_{ij}|)]$. If we run the population genetics dynamics with this fitness, then we will converge to the same equilibrium as when the differential fitness is $\tilde{w}$.  
\end{cor}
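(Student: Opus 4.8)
The plan is to notice that the perturbed process is \emph{still} the multiplicative-update dynamics of a coordination game with the same step size $\epsilon=s$, only now with a payoff matrix that varies from generation to generation. Write the perturbed fitness as $w_{ij}^t = 1 + s\Delta_{ij}^t$, where $\Delta_{ij}^t := \Delta_{ij} + \nu_{ij}^t/s$. The hypothesis $\nu_{ij}^t\in[-s(1-|\Delta_{ij}|),\,s(1-|\Delta_{ij}|)]$ almost surely gives $|\Delta_{ij}^t|\le |\Delta_{ij}| + (1-|\Delta_{ij}|) = 1$, so every $w_{ij}^t$ remains in $[1-s,1+s]$ and the normalizer $1+s\bar\Delta^t$ stays in $[1-s,1+s]$, in particular bounded away from zero; thus no degeneracy is created at any step. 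By (the computation in) Theorem 4, the allele frequencies then evolve \emph{exactly} by the multiplicative-update rule with $\epsilon=s$ and payoffs $\Delta^t$, and since $\nu_{ij}^t$ is zero-mean we have $\mathbb{E}[\Delta_{ij}^t]=\Delta_{ij}$, the unperturbed differential fitness.

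Next I would invoke the regret bound underlying Corollary 4 (Theorem 3 of \cite{kalethesis}): it is proved against an \emph{arbitrary}, even adversarially chosen, sequence of payoff vectors, so it applies verbatim to the realized sequence $\Delta^t$. Along the perturbed trajectory, for every fixed pair $i^*,j^*$,
$$\frac1T\Big(\sum_{t=1}^T m_x^t\cdot x^t + m_y^t\cdot y^t\Big) \;\ge\; \frac1T\sum_{t=1}^T\big(m_x^t(i^*)+m_y^t(j^*)\big) - O\!\Big(\frac{\log n+\log m}{sT}\Big) - \frac{s}{T}\sum_{t=1}^T\big(|m_x^t(i^*)|+|m_y^t(j^*)|\big),$$
where now $m_x^t(i)=\sum_j y_j^t\Delta_{ij}^t$ and similarly $m_y^t$ — i.e.\ both sides are computed with the \emph{perturbed} payoffs. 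To replace $\Delta^t$ by its mean $\Delta$ without changing the asymptotics, decompose $\Delta_{ij}^t = \Delta_{ij} + \eta_{ij}^t$ with $\eta_{ij}^t := \nu_{ij}^t/s$, which is bounded by $1-|\Delta_{ij}|\le 1$, independent across $t$, and zero-mean. Since $x^t,y^t$ are measurable with respect to $\mathcal F_{t-1}=\sigma(\nu^{1},\dots,\nu^{t-1})$ (they are determined by the earlier updates) and $\nu^t$ is independent of $\mathcal F_{t-1}$, the sequence $\sum_j y_j^t\eta_{ij}^t$ is a bounded martingale-difference sequence; by Azuma--Hoeffding (or the martingale strong law) $\frac1T\sum_{t\le T}\sum_j y_j^t\eta_{ij}^t\to 0$ almost surely, and likewise for the fixed-index comparator terms $\frac1T\sum_{t\le T}\eta_{i^*j}^t$ and their $y$-side analogues. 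Substituting, the displayed inequality becomes, almost surely as $T\to\infty$, precisely the statement of Corollary 4 for the \emph{unperturbed} matrix $\Delta$: the perturbed trajectory is asymptotically no-regret with respect to $\Delta$.

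Finally I would close the loop with convergence. The unperturbed multiplicative-update dynamics — equivalently, the population-genetics process, via Theorems 1 and 4 — converges, and Corollary 4 pins down its limit through exactly this no-regret property with payoff matrix $\Delta$; since the perturbed trajectory now has the same property with the same $\Delta$, it converges to the same equilibrium. I expect the last step to be the main obstacle: turning ``asymptotically no regret against $\Delta$'' into ``converges to \emph{the same} fixed point'' needs more than the regret inequality. One route is to use the coordination-game structure — average fitness is a potential, and under the zero-mean perturbation the perturbed process is an almost-supermartingale for (minus) that potential, hence converges by the Robbins--Siegmund lemma — together with a coupling of the perturbed and unperturbed runs from the common initial allele frequencies, so that one identifies the \emph{particular} limit rather than merely the limit set. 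The bound $|\Delta_{ij}^t|\le 1$, which keeps every generation strictly inside the weak-selection regime, is exactly what prevents the noise from driving the process onto a spurious boundary face of the simplex and is therefore essential to this last argument.
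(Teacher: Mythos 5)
Your proposal follows essentially the same route as the paper: verify that the perturbed fitnesses stay inside the weak-selection window $[1-s,1+s]$ (using exactly the bound $|\Delta_{ij}+\nu_{ij}^t/s|\le|\Delta_{ij}|+(1-|\Delta_{ij}|)=1$), reuse the regret machinery of Corollary 4 on the perturbed trajectory, and then argue that the noise contributions vanish in time-average so that the maximizing pair $i,j$ is the same as for $\tilde w$. Where you differ is in how the averaging-out is justified, and your version is the more careful one: the paper simply invokes the strong law of large numbers on $\frac1T\sum_t\nu_{ij}^t$, implicitly treating the noise as entering additively and independently of the trajectory, whereas the actual mixability terms are $\sum_j y_j^t\,\nu_{ij}^t/s$ with $y^t$ depending on the past noise, so the correct tool is precisely the martingale-difference/Azuma argument you give (conditioning on $\mathcal F_{t-1}$ and using independence of $\nu^t$ from the past). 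You also flag, correctly, that passing from ``the perturbed trajectory is asymptotically no-regret against $\Delta$'' to ``it converges to the \emph{same} equilibrium'' requires an additional argument identifying the limit point rather than just the limiting average payoff; be aware that the paper's own proof does not supply this step either --- it asserts the identification of limits directly --- so your Robbins--Siegmund/potential suggestion is an attempt to repair a gap that is present in the original, not a detour from it.
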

\begin{proof}
First we will show that the differential fitness at any point in time is subject to the weak selection regime. For any $t$, by the almost sureness of the bound on $\nu_{ij}^{t}$ it can be written in the form $1+s(\Delta_{ij}+ \nu_{ij}^{t}/s)$, where $|\nu|/s \leq 1-|\Delta_{ij}|$. Therefore the differential fitness still satisfies weak selection since $|\Delta_{ij}+ \nu/s| \leq |\Delta_{ij}|+|\nu_{ij}^{t}|/s \leq 1$. The convergence result from the Corollary therefore still holds for the dynamics in this case. The $i,j$ that is converged to maximizes:
$$ \frac{1}{T}\sum_{t=1}^{T} (2+s(\sum_{j} y_{j}^{t}\Delta_{ij} +\sum_{i} x_{i}^{t}\Delta_{ij}) + \nu_{ij}) = \frac{1}{T}\sum_{t=1}^{T} 2+s(\sum_{j} y_{j}^{t}\Delta_{ij} +\sum_{i} x_{i}^{t}\Delta_{ij}) + \frac{1}{T}\sum_{t=1}^{T} \nu_{ij}^{t}$$
  Because we are analyzing asymptotic behavior at equilibrium, we take the limit $T \rightarrow \infty$. In this case, for all $i,j$, $\frac{1}{T}\sum_{t=1}^{T} \nu_{ij}^{t} = 0$ by the strong law of large numbers since $E[\nu_{ij}^{t}]=0$. But then one is simply converging to $\frac{1}{T}\sum_{t=1}^{T} (2+s(\sum_{j} y_{j}^{t}\Delta_{ij} +\sum_{i} x_{i}^{t}\Delta_{ij}) + \nu_{ij})$, which is the same as what is being converged to for $\bar{w}$. Thus it holds for the $i,j$ that maximizes $\frac{1}{T}\sum_{t=1}^{T} (2+s(\sum_{j} y_{j}^{t}\Delta_{ij} +\sum_{i} x_{i}^{t}\Delta_{ij})$.   
\end{proof}

\section{Diversity}

We know that the population genetic dynamics converges, but what is the nature of the equilibria?  In particular, are they likely to have extensive support?  This would imply that diversity is not totally lost in the process, and considerably strengthen our model as a candidate for the role of sex in Evolution.  

To tackle this problem, we need a probabilistic model on the fitness landscape.  As we are assuming weak selection, we postulate that the entries of $W$ are drawn iid from a continuous distribution on $[1-s,1+s]$ with no singularities {\em that is symmetric around 1}.  Equivalently, the entries of $\Delta$ are iid on $[-1,1]$.   (In fact, our results do not require that the distributions of the entries be identical.)

\def\1{\hbox{\bf 1}}

At equilibrium, all alleles of a gene must have the same mixability (expected fitness with respect to the frequencies of the other alleles).  Focusing on the two gene case (here by necessity, because the more general case seems intractable), it must be that $Wx = a\1$ and $W^T y = b\1$ for some real vectors with nonnegative coefficients $x, y$ and some reals $a,b$ (in fact, it is easy to see that $a$ will be equal to $b$).  An equilibrium is characterized by the supports of $x$ and $y$, or, equivalently, by the submatrix defined by these.  Let us call a submatrix $A$ of $W$ an {\em equilibrium} if $Ax=a\1$ and $A^Ty=a\1$ have nonnegative solutions $x,y$ adding to one, for some $a>1$.  We require that $a>1$ for this reason:  If $a<1$, then $A$ is indeed an equilibrium, but one that leads to extinction.

Consider an $m\times n$ submatrix $A$ of the fitness matrix.  When is $A$ an equilibrium?  First of all, if $A$ is not square, say $m<n$, then the probability of $A$ being an equilibrium is zero, because then the system $A^Ty = a\1$ is overdetermined.  So, we shall focus on a square submatrix $A$.  Under weak selection, we can write $A= U + sB$.

\begin{lem}
$A$ is an equilibrium if and only if $B^{-1}$ exists and has positive row and column sums.
\end{lem}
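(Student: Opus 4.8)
The plan is to unwind the definition of equilibrium for a square matrix $A = U + sB$, where $U$ is the all-ones matrix $\mathbf{1}\mathbf{1}^T$. First I would observe that the condition $Ax = a\mathbf{1}$ with $x$ summing to one means $(U + sB)x = a\mathbf{1}$; since $Ux = (\mathbf{1}^Tx)\mathbf{1} = \mathbf{1}$, this becomes $\mathbf{1} + sBx = a\mathbf{1}$, i.e. $sBx = (a-1)\mathbf{1}$. So $Bx = \frac{a-1}{s}\mathbf{1}$, and requiring $a > 1$ is the same as requiring $\frac{a-1}{s} > 0$. Writing $c = \frac{a-1}{s} > 0$, the row condition says exactly: there is a nonnegative $x$ with $\mathbf{1}^Tx = 1$ and $Bx = c\mathbf{1}$ for some $c > 0$. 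Symmetrically, the column condition $A^Ty = a\mathbf{1}$ with the same $a$ (recall $a = b$, as noted) unwinds to $B^Ty = c\mathbf{1}$ with $y$ nonnegative and summing to one — and one should double-check that the two instances of $c$ agree, which follows from $\mathbf{1}^T B x = c\,\mathbf{1}^T\mathbf{1} = cn$ and $x^T B^T y = x^T (c\mathbf{1}) = c$, matched against the analogous computation from the $y$ side, forcing the same scalar (alternatively just absorb any mismatch into the normalization, since the normalization of $x$ and $y$ is separate).

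Next I would handle the genericity/nonsingularity point: since the entries of $B$ are drawn iid from a continuous distribution, $B$ is invertible with probability one, so we may assume $B^{-1}$ exists. Given invertibility, $Bx = c\mathbf{1}$ has the unique solution $x = c\,B^{-1}\mathbf{1}$, whose coordinates are $c$ times the row sums of $B^{-1}$. The normalization $\mathbf{1}^Tx = 1$ then forces $c = 1/(\mathbf{1}^T B^{-1}\mathbf{1})$, and the two requirements "$x \geq 0$" and "$c > 0$" together say precisely that all row sums of $B^{-1}$ are positive (if they were all negative, $x$ would be nonnegative but $c$ negative, the extinction case; if they had mixed signs, $x$ would not be nonnegative for either sign of $c$). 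The same argument with $B^T$ in place of $B$ shows the column condition is equivalent to all row sums of $B^{-T}$ being positive, i.e. all column sums of $B^{-1}$ being positive. Combining, $A$ is an equilibrium iff $B^{-1}$ exists and has positive row and column sums, which is the claim.

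The one genuinely delicate point — the main obstacle, such as it is — is making the scalar bookkeeping airtight: confirming that the $a$ appearing in $Ax = a\mathbf{1}$ and the $a$ in $A^Ty = a\mathbf{1}$ can indeed be taken equal (the excerpt asserts this is "easy to see," and the cleanest justification is the bilinear-form computation $\mathbf{1}^T A x$ evaluated two ways), and then tracking that a single positivity condition $c>0$ controls both the "$a>1$" requirement and the sign consistency of $x$. Everything else is routine linear algebra plus the almost-sure invertibility of a random matrix. I would present the forward direction (equilibrium $\Rightarrow$ positive row/column sums) and the reverse (positive sums $\Rightarrow$ exhibit $x = B^{-1}\mathbf{1}/(\mathbf{1}^TB^{-1}\mathbf{1})$ and the analogous $y$) as the two halves, each a couple of lines once the reduction $A = U + sB$, $Ux=\mathbf{1}$ is in place.
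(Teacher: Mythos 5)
Your proposal is correct and follows essentially the same route as the paper: decompose $A = U + sB$, use the normalization $Ux = \mathbf{1}$ to reduce the equilibrium conditions to $Bx = c\mathbf{1}$, $B^Ty = c\mathbf{1}$ with $c = (a-1)/s > 0$, and read off the sign conditions from $x = cB^{-1}\mathbf{1}$. You are somewhat more explicit than the paper about the scalar bookkeeping, the mixed-sign case, and the almost-sure invertibility of $B$, but the substance is identical.
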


\begin{proof} 
If $A$ is an equilibrium with $x,y>0$ the solutions to the linear systems with right-hand sides $a\1$, 
then it is easy to see that $Bx = {\1}(a-\sum_j y_j)=\1(a-1)$ and similarly $B^T y = {\1}(a-1)$.  Therefore $B^{-1}$ must exist and have positive row and column sums.  And from any nonnegative solutions of $Bz=\1,B^Tw = \1$ we can get back the solutions of $Ax=a\1,A^ty=a\1,\sum_i x_i=1,\sum_j y_j=1$, adding up to one:  $x={z\over \sum_i z_i}$ and similarly $y={w\over \sum_j w_j})$ with $a=1+{s\over \sum_i z_i}>1$.  
\end{proof}

Thus, to show that there are enough equilibria with large support, we must calculate (more precisely, lower bound) the probability that a random matrix $B$ has an inverse with positive row and column sums.  Our main result is the following:

\begin{thm}  \label{diversitythm}
The probability that $A$ is an equilibrium is at least $2^{-(2n-1)}$.
\end{thm}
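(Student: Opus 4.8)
\noindent The plan is an averaging argument over sign changes of $B$, together with a potential function that pins down one favourable sign pattern. By Lemma~8, $A$ is an equilibrium exactly when $B^{-1}$ exists and has positive row and column sums, equivalently when both linear systems $Bx=\1$ and $B^{T}y=\1$ have strictly positive solutions; write $Q$ for this event, so $\Pr[A\text{ is an equilibrium}]=\Pr[Q]$ (the condition $a>1$ is already absorbed into Lemma~8). Since the entries of $B$ are i.i.d.\ from a continuous distribution symmetric about $0$, negating any row or column of $B$ leaves its law unchanged; hence for every pair of diagonal $\pm1$ matrices $D,E$ the matrix $DBE$ is distributed like $B$. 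Therefore
\[
\Pr[Q]\;=\;\frac{1}{2^{2n}}\sum_{D,E}\Pr[DBE\in Q]\;=\;E\!\left[\frac{1}{2^{2n}}\,\#\{(D,E):DBE\in Q\}\right],
\]
and it suffices to show that $\#\{(D,E):DBE\in Q\}\ge 2$ for almost every $B$.

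To exhibit sign changes that land in $Q$, I would introduce the bilinear potential $\Phi(d,e)=e^{T}B^{-1}d$ over $(d,e)\in\{\pm1\}^{n}\times\{\pm1\}^{n}$ --- here $d$ is the row-flip pattern and $e$ the column-flip pattern, in a spirit reminiscent of the Berlekamp switching game. For almost every $B$, $B$ is nonsingular and, for each of the finitely many sign vectors $d$ (resp.\ $e$), the vector $B^{-1}d$ (resp.\ $(B^{-1})^{T}e$) has no zero coordinate; fix such a $B$ and let $(d^{*},e^{*})$ maximize $\Phi$. Since $\Phi(d,e)=\sum_{i}e_{i}(B^{-1}d)_{i}$, for fixed $d$ the \emph{unique} maximizing $e$ is $\mathrm{sign}(B^{-1}d)$, and symmetrically for fixed $e$ the unique maximizing $d$ is $\mathrm{sign}((B^{-1})^{T}e)$. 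Hence at the maximizer \emph{both} $e^{*}=\mathrm{sign}(B^{-1}d^{*})$ and $d^{*}=\mathrm{sign}((B^{-1})^{T}e^{*})$ hold.

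Put $D^{*}=\mathrm{diag}(d^{*})$ and $E^{*}=\mathrm{diag}(e^{*})$. Then $(D^{*}BE^{*})^{-1}=E^{*}B^{-1}D^{*}$, whose vector of row sums is $E^{*}B^{-1}d^{*}$, with $i$-th entry $e^{*}_{i}(B^{-1}d^{*})_{i}=|(B^{-1}d^{*})_{i}|>0$, and whose vector of column sums is $D^{*}(B^{-1})^{T}e^{*}$, with $i$-th entry $d^{*}_{i}((B^{-1})^{T}e^{*})_{i}=|((B^{-1})^{T}e^{*})_{i}|>0$. So $D^{*}BE^{*}\in Q$. Moreover the distinct pair $(-D^{*},-E^{*})$ yields the same matrix, $(-D^{*})B(-E^{*})=D^{*}BE^{*}$, and so also lies in $Q$; hence $\#\{(D,E):DBE\in Q\}\ge 2$, and the displayed identity gives $\Pr[Q]\ge 2\cdot 2^{-2n}=2^{-(2n-1)}$, as claimed.

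I expect the real difficulty to be conceptual rather than computational: guessing that the bilinear form $\Phi$ is the right potential, and recognizing that its maximizer over the $\pm1$ cube encodes a pair of row/column sign flips carrying $B$ into $Q$ (which is otherwise awkward to hit because the two positivity requirements are not independent). After that, what remains is routine: checking that the finitely many determinant and coordinate polynomials in the entries of $B$ that must avoid $0$ are not identically zero (so their zero sets are Lebesgue-null, hence have probability $0$), the two one-line sign identities at the maximizer, and the elementary averaging over the $2^{2n}$ sign changes. The factor-$2$ improvement over the naive $2^{-2n}$ is handed to us for free by the symmetry $\Phi(d,e)=\Phi(-d,-e)$.
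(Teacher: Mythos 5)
Your proof is correct and follows essentially the same route as the paper: you average over all $2^{2n}$ row/column sign flips (the paper phrases this as covering $[-1,1]^{n\times n}$ by the $2^{2n-1}$ distinct measure-preserving flip maps) and exhibit a good flip via the potential $\Phi(d,e)=e^{T}B^{-1}d$, which is exactly the total entry sum of the flipped inverse that the paper increases by local search in its Berlekamp-style lemma. Passing directly to the global maximizer rather than running the local search is a minor (and slightly cleaner) variation that yields the same sign conditions and the same factor-of-two gain.
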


\begin{proof}
By the Lemma, we must bound from below the probability that $B^{-1}$ has positive sums (since $B^{-1}$ exists with probability one).

Let $S\subseteq [n]$.  By $I_S$ we denote the $n \times n$ unit matrix with the $i$th one replaced by $-1$ whenever $i\in S$.  Notice that $I_S E I_T$ is E with all columns in $S$ flipped (multiplied by $-1$) and all rows in $T$ also flipped (with entries at the intersection of a row in $T$ and a column in $S$ restored to its original value).  Notice that $(I_SEI_T)^{-1} = I_TE^{-1}I_S$.   That is, to invert $E$ with some rows and columns flipped, you invert $E$ and then flip the same columns and rows, with the roles of columns and rows interchanged.

Now take $B$ and consider all possible flippings $I_SBI_T$.  There are $2^{2n-1}$ distinct such matrices, because it is easy to see that $I_SBI_T= I_{[n]-S}BI_{[n]-T}$ and that all other pairs of flippings are distinct.  We shall argue that one of these flippings must have positive row and column sums.

\begin{lem}
For every $B\in[-1,1]^{n\times n}$ there are $S,T\subseteq [n]$ such that $I_SBI_T$ has nonnegative row and column sums.
\end{lem}

\begin{proof} To prove the lemma, start with $B$ and perform the following:

\medskip\noindent\centerline{\em while there is a row or column with negative sum, flip it.}

\medskip Naturally, after each such flipping other columns or rows, which had positive sums, may become negative.  However, if the sum of the row or column that was flipped was $-\sigma$, notice two things:  First, $\sigma\geq \sigma_0$, where $\sigma_0$ is a constant depending on $B$; and second, {\em the total sum of the entries of $E$ increases by $2\sigma$ after the flip}.   Therefore, the process must end, and this can only happen if the matrix has positive row and column sums.
\end{proof} 

The theorem now follows:  Consider the domain $M=[-1,+1]^{n\times n}$, and the subset $M_+$ whose inverse exists and has positive sums.  This subset can be defined using polynomial inequalities, and is this measurable.  By applying the $2^{2n-1}$ transformations $B\mapsto I_S B I_T$ to $M_+$, by the Lemma we exhaust all of  $M$.  Therefore, the probability that a matrix in $M$ is in $M_+$ is at least $2^{-(2n-1)}$.
\end{proof}

\begin{cor} \label{numbereq}
The expected number of $k\times k$ equilibria in an $m\times n$ weak selection fitness matrix  is at least   $2\left({mn\over 4k^2}\right) ^k$.
\end{cor}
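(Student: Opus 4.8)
The plan is to derive this directly from Theorem~\ref{diversitythm} by a first-moment (linearity of expectation) argument over all $k\times k$ submatrices of $W$. First I would set up the count: a $k\times k$ equilibrium is determined by a choice of $k$ rows (alleles of the first gene) and $k$ columns (alleles of the second gene) forming the support, so there are $\binom{m}{k}\binom{n}{k}$ candidate submatrices, and distinct submatrices give distinct equilibria. For a fixed choice $(R,C)$ of $k$ rows and $k$ columns, the submatrix $W_{R,C}$ is itself a random matrix whose $k^2$ entries are iid from the weak-selection distribution; hence Theorem~\ref{diversitythm} applies to it with $n$ replaced by $k$, giving $\Pr[W_{R,C}\text{ is an equilibrium}]\geq 2^{-(2k-1)}=2\cdot 4^{-k}$.

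Next I would let $X_{R,C}$ be the indicator that $W_{R,C}$ is an equilibrium and write the number of $k\times k$ equilibria as $\sum_{R,C}X_{R,C}$. Linearity of expectation (no independence needed, which is fortunate since the submatrices overlap and are strongly correlated) then yields an expected count of at least $\binom{m}{k}\binom{n}{k}\cdot 2\cdot 4^{-k}$.

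Finally I would bound the binomials from below using $\binom{m}{k}\geq(m/k)^k$ and $\binom{n}{k}\geq(n/k)^k$, so that $\binom{m}{k}\binom{n}{k}\geq(mn/k^2)^k$, and hence the expected number of $k\times k$ equilibria is at least $2\cdot 4^{-k}\cdot(mn/k^2)^k=2\,(mn/(4k^2))^k$, which is the claimed bound.

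The argument is essentially bookkeeping, so there is no real obstacle; the only points that merit a line of justification are that the entries of a fixed $k\times k$ submatrix of $W$ are again iid, continuous with no singularities, and symmetric about $1$ (so Theorem~\ref{diversitythm} genuinely applies), and that the requirement $a>1$ in the definition of equilibrium is already folded into the statement of that theorem. I would also note that, because the events $X_{R,C}$ are far from independent, this first-moment estimate gives no concentration; turning it into a ``with high probability there are many large-support equilibria'' statement would need a second-moment computation, which we do not carry out here.
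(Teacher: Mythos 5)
Your proposal is correct and is exactly the argument the paper intends: its one-line proof (``use $\binom{n}{k}\geq (n/k)^k$'') is just the compressed version of your linearity-of-expectation count over the $\binom{m}{k}\binom{n}{k}$ candidate supports, each an equilibrium with probability at least $2^{-(2k-1)}$ by Theorem~\ref{diversitythm}. Your added remarks---that the iid/symmetry hypotheses pass to submatrices and that the first moment gives no concentration---are sound and, if anything, more careful than the original.
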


Notice that, for $k$ less than half times the geometric mean of $m$ and $n$, this is exponential in $k$.

\medskip 

\begin{proof}
Use the approximation  ${n \choose k} \geq ({n\over k})^k$.
\end{proof}

\section{Discussion and Open Problems}

Les Valiant \cite{valiant} was the first to point out a connection between Evolution and Learning:  The class of traits which are achievable by a species through random mutations constitute a specialized kind of learnability (a subcase of statistical learning, actually).  Here we point out a different connection between these two fundamental computational categories:  On the Wright manifold (that is, when linkage disequilibrium can be disregarded), Evolution {\em is} in fact learning through multiplicative updates --- a well-known learning algorithm of very broad appeal. The same reasoning establishes Evolution as a coordination game between genes.  Genes are the players, alleles are the strategies, and the mixed strategies are the allele statistics in the population.  Notice how the organism is sidelined in this viewpoint:  Evolution is an interaction between the genes (acting in a seemingly strategic manner) and the population (which stores the system's state). This is consistent with the gene-centric view of evolution, by which much can be explained by just focusing on evolution of and interaction between genes rather than organisms or species \cite{szathmary1995major}.

There is an important difference between Valiant's work and our point of view:  Valiant's evolvability is concerned with understanding random mutations.  We are concerned in the ways in which sex and recombination nudge allele frequencies in the direction of high mixability (expected fitness); mutations are only implicit in our model, they are the (much slower) background process that creates the diversity exploited by recombination. Evolution and coordination games are not to be thought of as identical, however. A key difference between Evolution and coordination games is this:  In a coordination game, a player may switch to a strategy which is currently not in the player's support (is being played with probability zero), once it becomes favorable.  In Evolution, once an allele becomes extinct it never comes back (except through a new mutation, at a time scale far bigger than our current concerns).  That is why the equilibrium of the process may not be a Nash equilibrium of the original game (but it is a mixed Nash equilibrium of the subgame defined by the support).

Starting with Darwin, researchers have often expressed an instinctive sense of disbelief that all Life we see around us could have come about through the rather rudimentary processes envisioned by Evolution.  Connecting Evolution with multiplicative updates may help a little in this cognitive/cultural difficulty; after all, the multiplicative updates algorithm has surprised us time and again with its seemingly miraculous performance and applicability.

This work leaves open a variety of questions. Besides weak selection, for which other classes of fitness landscapes is the present analysis applicable?  It is known that {\em product landscapes} (the fitness of a genotype is the product of fitness values, one for each allele present in the genotype) have the property of {\em staying} on the Wright manifold, once they are started there, but are not in general attracted to it \cite{burger}.  It would be very interesting to come up with a combinatorial characterization of the fitness landscapes that converge to the Wright manifold (like the weak selection landscapes) and of those which at least stay there (like the multiplicative landscapes).

Secondly, in the context of the diversity proof in the previous section, consider the following computational problem: ``flip the rows and columns of a given nonsingular square matrix such that the inverse  has nonnegative row and column sums.''  The existence proof through a potential function places this problem in the class PLS \cite{jpy}.  Is it PLS-complete?

Finally, we would love to extend our diversity result (Corollary \ref{numbereq})  to three or more genes.  Unfortunately, in this more general case the equations become multilinear, and such equations are very hard both to solve and to argue about.  Our simulations, however, show that large equilibria exist in the multi-gene setting as well.

\bibliographystyle{unsrt}
 \bibliography{evogames}

\end{document}